\newtheorem{thm}{Theorem}[section]
\newtheorem{prop}[thm]{Proposition}
\theoremstyle{definition}
\newtheorem{defn}[thm]{Definition}
\theoremstyle{remark}
\newtheorem*{ex}{Example}
\begin{document}

\begin{frontmatter}

\title{Combinatorics of Linked Systems of Quartet Trees}

\author[label1]{Emili Moan}
\address[label1]{Dept.\ of Mathematics, Winthrop University, Rock Hill, SC 29733 USA}

\cortext[cor1]{corresponding author}

\ead{pricee4@winthrop.edu}

\author[label1]{Joseph Rusinko\corref{cor1}}
\ead{rusinkoj@winthrop.edu}


\begin{abstract}
We apply classical quartet techniques to the problem of phylogenetic decisiveness and find a value $k$ such that all collections of at least $k$ quartets are decisive. Moreover, we prove that this bound is optimal and give a lower-bound on the probability that a collection of quartets is decisive.
\end{abstract}

\begin{keyword}
Phylogenetics, Quartets, Decisiveness
\end{keyword}

\end{frontmatter}


\section{Overview}
Evolutionary biologists represent relationships between groups of organisms with phylogenetic trees. Supertree methods were designed to handle the computationally difficult problem of reconstructing such trees for large data sets. Those methods generate a group of accurate, smaller input trees and combine them into a single supertree. Four-taxa trees, known as quartet trees, are commonly used as inputs in supertree methods. 

Most quartet amalgamation algorithms use all quartet trees generated from sequencing data or only remove quartet trees that appear to be incorrect. As quartet trees may contain overlapping information, it is possible that a smaller number of trees may provide sufficient information for accurate reconstruction. 

In \cite{bocker1998}, B{\"o}cker, et al. developed a sufficient condition for a set of quartet trees to be definitive.  For any tree on a taxon set $[n]=\{1,2,\cdots, n\}$, we develop a system of quartet trees that meets B{\"o}cker’s criteria, known as a linked system. Additionally, we develop collections of linked systems known as meshed systems.

Recently, Steel and Sanderson asked for which collections of sets of taxa do the corresponding induced subtrees determine a unique supertree. They called such collections decisive. The notion of decisiveness can be viewed as a generalization of definitiveness where no information is required about the particular subtrees that the subsets of taxa induce.  This notion plays an important role in supertree reconstruction since it a priori addresses the question about which subsets of taxa must be analyzed to ensure that a unique supertree can be reconstructed.

We use the term quartet to refer to any four element taxon subset, and the term quartet tree when referencing a resolved four taxa tree. Using meshed systems, we find a minimal number $k(n)$ such that every collection of at least $k$ quartets is decisive. We use this number to find a lower bound on the probability that an arbitrary collection of quartets is decisive. 

Finally, we find that meshed systems may be useful in amalgamation algorithms, such as Maxcut \cite{snir}, that do not always find the correct supertree when given a definitive system of quartet trees.

\section{Linked Systems}

We adopt the terminology in \cite{huber}, except in noted instances when we follow \cite{semple} or \cite{steel}.  Phylogenetic trees display relationships among a finite set of taxonomic units. 

\begin{defn}
A \emph{binary phylogenetic tree}, $T=(V,E,\varphi)$ on a finite set of taxa $X$, is a triple consisting of a finite set of vertices, V , a set $E$ of edges between vertices, and a "labeling" map $\varphi : X \rightarrow L$, where $L \subset V$ contains all vertices of degree one or \emph{leaves}, such that the graph $(V,E)$ is an unrooted binary tree and
the map $\varphi$ induces a bijection between $X$ and the set $L$ of leaves of $T$.
\end{defn}

An edge that contains a leaf is an \emph{exterior edge}. The non-leaf vertex of an exterior edge is  the \emph{internal vertex of $e$}, denoted $v_{int}(e)$. Two exterior edges sharing an internal vertex form a \emph{cherry}. Any edge that is not an exterior edge is an \emph{interior edge}.

While edge length plays an important role in phylogenetics, we do not take it into account, and adopt instead a topological definition of tree isomorphism.

\begin{defn}
Phylogenetic trees, $T_1=(V_1, E_1, \varphi_1)$ and  $T_2=(V_2, E_2, \varphi_2)$ on a taxon set $X$, are \emph{isomorphic} if there exists a bijective map $f : V_1 \rightarrow V_2$, called an \emph{isomorphism}, such that if $\{u,v\} \in E_1$ then $\{f (u), f (v)\} \in E_2$ and  for every  $x \in X$ we have $\varphi_2 (x)=f (\varphi_1 (x))$ . 
\end{defn}

It is impossible to distinguish phylogenetic relationships from unrooted trees with fewer than four taxa; thus, supertree reconstruction algorithms frequently use four taxa trees or \emph{quartets trees} as inputs \cite{snir, warnow, strimmer}. \emph{Quartet trees} are binary phylogenetic trees on four leaves. Such trees are in one-to-one correspondence with two-element subsets of $X$ such as $\{\{a, b\},\{c,d\}\}$ according to the separation of the four leaves by the interior edge. The union of all four taxa is the \emph{support of $q$}, denoted \emph{$supp(q)$}. 

Quartet trees contain an interior edge which separates the taxa into two pairs. Similarly, removing an interior edge of a tree separates the graph into two connected components. An edge $e$ \emph{separates} taxa $a$ and $b$ from $c$ and $d$ if $\{a,b\}$ and $\{c,d\}$ are subsets of the vertex sets of different connected components of $T-\{e\}$. This separation points to a relationship between edges of a tree and quartet trees. A quartet tree $ab|cd$ is \emph{displayed by a binary phylogenetic tree $T$} if there exists an edge $e \in E$ that separates $a$ and $b$ from $c$ and $d$.  

Denote the set of all quartet trees on a taxon set $X$ by $Q(X)$. Any subset $Q$ of $Q(X)$ is called a \emph{system of quartet trees} on $X$ with the support defined by $supp(Q)=\displaystyle\bigcup_{q\in Q} supp(q)$. Additionally, we denote the set of all quartet trees displayed by a tree $T$ by $Q_T$. A system of quartet trees $Q$ is \emph{compatible} if there exists a tree $T$ such that $Q \subseteq Q_T$. 

\begin{defn}
(\cite{semple}) Let $T = (V,E,\varphi)$ be a binary phylogenetic tree and let $ab|cd \in Q_T$. An interior edge $e$ of $T$ is \emph{distinguished} by $ab|cd$ if $e$ is the only edge that separates $a$ and $b$ from $c$ and $d$.  
\end{defn}

Quartet trees which distinguish edges are a powerful input to quartet amalgamation algorithms. These algorithms must handle non-compatible systems of quartet trees.  However, even compatible systems may be difficult to resolve as multiple trees may display a particular collection of quartet trees.

\begin{defn} (\protect\cite{steel})
A system of quartet trees, $Q$, is \emph{definitive}, if up to isomorphism, there exists a unique binary phylogenetic tree $T$ for which $Q \subseteq Q_T$. 
\end{defn}

In \cite{bocker1998}, B{\"o}cker described various criteria for a system of quartet trees of the size $n-3$ to be definitive. We construct systems of quartet trees that meet this criteria and make note of some useful applications of these systems. 

\begin{prop} 
(Example 3.7 of \cite{bocker1998})
If $T$ is a binary tree such that the interior edges of $T$ are labeled $E=\{e_1,...,e_{n-3}\}$, and $Q$ is a system of quartet trees such that each $q_i \in Q$ distinguishes a unique edge $e_i$ in $T$ with \[|supp(q_i)\backslash \bigcup_{j<i} supp(q_j)|= 1\] for $i=2,...,n-3$, then $Q$ is definitive. 
\label{bocker2}
\end{prop}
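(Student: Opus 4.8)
The plan is to argue by induction, building the tree one leaf at a time. Write $X_i=\bigcup_{j\le i}supp(q_j)$; the support condition forces $|X_i|=i+3$, and in particular $supp(Q)=X$, so that $T$ itself is recovered at the last stage. For each $i$ let $T_i$ denote the binary phylogenetic tree induced by $T$ on the leaf set $X_i$ (delete the leaves outside $X_i$ and suppress the resulting degree-two vertices). The statement to be proved is the case $i=n-3$ of the stronger claim that each $Q_i=\{q_1,\dots,q_i\}$ is definitive with unique tree $T_i$.

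First I would record a structural lemma saying that the hypotheses are inherited by every $T_i$. The key observation is that distinguishing is preserved under passing to an induced subtree on a superset of a quartet's support: every edge of $T_i$ corresponds to an edge-disjoint path in $T$, a displayed quartet straddles that edge of $T_i$ exactly when its unique separating edge in $T$ lies on the corresponding path, and so $q_j$ still distinguishes a unique edge $\bar e_j$ of $T_i$ for each $j\le i$. A short companion argument shows the $\bar e_j$ are pairwise distinct: if $\bar e_j=\bar e_k$, then $e_j$ and $e_k$ both lie on the common path, so $q_k$ would be displayed by $e_j$ as well as by $e_k$ in $T$, contradicting that $q_k$ distinguishes $e_k$ uniquely. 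Since $T_i$ has exactly $i$ interior edges, the $\bar e_1,\dots,\bar e_i$ are precisely the interior edges of $T_i$, each distinguished by its own quartet; together with the inherited support condition, this shows $Q_i$ and $T_i$ again satisfy the hypotheses of the proposition, so the induction is legitimate.

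With this in hand the induction runs as follows. The base case $i=1$ is immediate, since a single quartet tree is the unique binary tree on its four leaves. For the inductive step, let $S$ be any binary tree on $X_i$ displaying $Q_i$. Restricting to $X_{i-1}$ and invoking the inductive hypothesis gives $S|_{X_{i-1}}\cong T_{i-1}$, so $S$ is obtained from $T_{i-1}$ by attaching the single new taxon $x_i\in supp(q_i)\setminus X_{i-1}$ along some edge. Everything then reduces to showing that $q_i$ forces this attachment edge.

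The main obstacle, and the crux of the whole argument, is exactly this uniqueness of the attachment. I would handle it by first analysing $T_i$ itself, viewed as $T_{i-1}$ with $x_i$ inserted on an edge $h$. If $h$ were an interior edge of $T_{i-1}$, subdividing it would create two interior edges of $T_i$ lying over $h$; but $h$ is distinguished in $T_{i-1}$ by some $q_m$ with $m<i$ (the structural lemma at stage $i-1$), and that $q_m$ would then be displayed by both new edges, contradicting that it distinguishes a unique edge of $T_i$. Hence $x_i$ is attached on a pendant edge, producing a cherry, and comparing this cherry with the requirement that $q_i=a\,x_i\,|\,bc$ distinguish its edge forces the cherry partner to be $a$ and its apex to be adjacent to the median of $a,b,c$ (the vertex where the three paths to $a,b,c$ meet). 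Deleting $x_i$ shows $a$ is already adjacent to that median in $T_{i-1}$, so the pendant edge of $a$ is the only edge of $T_{i-1}$ whose subdivision yields a tree displaying $q_i$. Thus the attachment, and hence $S$, is forced to equal $T_i$; taking $i=n-3$ gives the proposition.
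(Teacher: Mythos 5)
The paper does not prove this proposition at all: it is imported verbatim as Example~3.7 of B\"ocker et al.\ and used as a black box, so there is no in-paper argument to compare yours against. Judged on its own, your proof is correct and self-contained. The skeleton (induction on $X_i=\bigcup_{j\le i}supp(q_j)$, with $|X_i|=i+3$, showing each $Q_i$ is definitive for $T_i=T|X_i$) is the natural route, and you correctly identify and handle the two points where such arguments usually go wrong. First, the structural lemma is genuinely needed and your justification works: since all four support taxa of $q_j$ lie in $X_i$, both endpoints of $e_j$ survive the restriction, so $q_j$ still distinguishes a single edge of $T_i$, and a counting argument makes these the full set of interior edges. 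Second, the crux is the uniqueness of the attachment edge for $x_i$, and your median argument is the right one: ruling out interior-edge attachment in $T_i$ via the already-distinguished edges, deducing that $a$ is adjacent to the median of $\{a,b,c\}$ in $T_{i-1}$, and concluding that the pendant edge of $a$ is the only subdivision site displaying $x_ia|bc$. The one step you leave implicit is the last inference: it rests on the fact that $S$ displays $x_ia|bc$ iff the $x_i$--$a$ path avoids the $b$--$c$ path, so $x_i$ must attach inside the component of $T_{i-1}$ minus the median that contains $a$, and adjacency of $a$ to the median shrinks that component to the single pendant edge. Spelling that out would make the write-up complete, but the idea is there and sound.
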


We create a system of quartet trees that satisfies the hypotheses of Proposition \ref{bocker2}, known as a linked system, by imposing an ordering on the interior edges of a tree and the quartet trees which distinguish those edges. We define linked systems in terms of the associated graph. 

\begin{defn}
For a compatible system of $n-3$ quartet trees $Q$ on a taxon set $X$, define the associated graph $G_T(Q)$ with vertex set $V$ and edge set $E$ as follows:
\begin{itemize}
\item The vertex set $V$ is the set of all quartet trees $q \in Q$ which distinguish a unique edge in $T$. 
\item Vertex pairs $\{q_i,q_j\}$ are connected by an edge $e \in E$ if the edge $e_i$ that $q_i$ distinguishes is adjacent to the edge $e_j$ that $q_j$ distinguishes and $|supp(\{q_i,q_j\})|=5$.
\end{itemize}
\label{order}
\end{defn}

\begin{defn}
Two quartet trees are \emph{linked} if their vertices are connected in $G_T(Q)$. The system of quartet trees $Q$, is \emph{a linked system} if $G_T(Q)$ is connected. See Figure \protect\ref{fig:graph} for an example.
\end{defn}

\begin{figure}[ht]
 \centering
 \includegraphics[width=.4\linewidth]{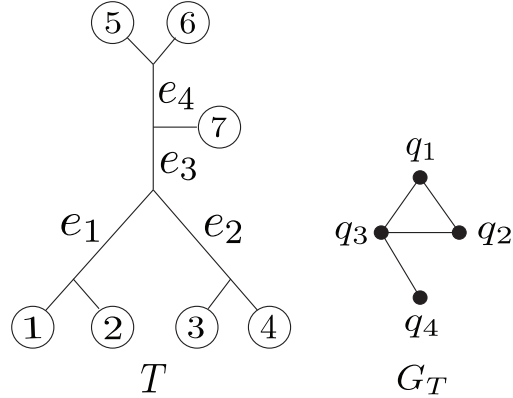}
 \caption{A binary phylogenetic tree $T$ and the associated graph $G_T(Q)$ for the quartet trees $q_1=12|35$, $q_2=34|15$, $q_3=57|13$ and $q_4=56|71$.}
\label{fig:graph}
\end{figure}

In Section \protect\ref{sec:linked} we prove that linked systems are definitive. Linked systems also help illuminate the broader concept of phylogenetic decisiveness which we review here.

For a binary phylogenetic tree $T$ and a subset $Y$ of $X$, let $T|Y$ denote the induced binary phylogenetic tree on leaf set $Y$ (the tree obtained from the minimal subtree connecting $Y$ by suppressing any vertices of degree $2$). Let $\mathcal{S}$ be the collection of subsets of a set $X$ of size four, we refer to all such subsets as \emph{quartets}. 
\cite{steel}

\begin{defn} \cite{steel}
We say that $\mathcal{S}$ is \emph{phylogenetically decisive} if it satisfies the following property: If $T$ and $T$' are binary phylogenetic trees, with $T|Y=T$'$|Y$ for all $Y \in \mathcal{S}$, then $T=T$'. 
\end{defn}

We will use collections of linked systems to find the minimal number $k(n)$ such that a collection of at least $k$ quartets is phylogenetically decisive.

\section{Applications of Linked Systems}
\label{sec:linked}
We first show that linked systems meet B{\"o}cker's criteria for defining a unique tree. 

\begin{thm}
 Every linked system of quartet trees is definitive.
\label{hierarchy} 
\end{thm}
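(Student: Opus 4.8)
The plan is to exhibit an ordering of the quartet trees in a linked system $Q$ that satisfies the hypotheses of Proposition \ref{bocker2}; definitiveness then follows at once. Throughout I write $Q=\{q_1,\ldots,q_{n-3}\}$ and let $e_i$ denote the unique interior edge of $T$ distinguished by $q_i$, so that the $n-3$ quartets distinguish the $n-3$ interior edges of $T$ one apiece. Because $Q$ is linked, every $q\in Q$ is a vertex of $G_T(Q)$ and $G_T(Q)$ is connected.

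First I would fix a spanning tree $S$ of $G_T(Q)$, root it at an arbitrary vertex, and reindex the quartets by a breadth-first order of $S$. This guarantees that each $q_i$ with $i\geq 2$ has a parent $q_{p(i)}$ in $S$ with $p(i)<i$, and that $q_i$ is linked to $q_{p(i)}$ in $G_T(Q)$; the resulting order on the quartets induces the labeling $e_1,\ldots,e_{n-3}$ of the interior edges required by Proposition \ref{bocker2}.

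The heart of the matter is a support count. By the definition of $G_T(Q)$, linked quartets satisfy $|supp(q_i)\cup supp(q_{p(i)})|=5$, hence $|supp(q_i)\cap supp(q_{p(i)})|=3$; since $supp(q_{p(i)})\subseteq\bigcup_{j<i}supp(q_j)$, this already gives $|supp(q_i)\backslash\bigcup_{j<i}supp(q_j)|\leq 1$ for every $i\geq 2$. To promote this to the equality that Proposition \ref{bocker2} needs, I would prove $supp(Q)=X$. For any leaf $x$ with internal neighbor $v$, the subtree of $T$ across the pendant edge $\{v,x\}$ is just $\{x\}$, so distinguishing any interior edge incident to $v$ requires a quartet one of whose taxa is $x$; as $v$ is incident to at least one interior edge and all interior edges are distinguished, $x\in supp(Q)$. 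Running over all leaves gives $|supp(Q)|=n$, and then the telescoping identity $4+\sum_{i=2}^{n-3}|supp(q_i)\backslash\bigcup_{j<i}supp(q_j)|=n$ forces every summand to equal $1$.

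With the ordering in hand, $q_1$ together with the equalities $|supp(q_i)\backslash\bigcup_{j<i}supp(q_j)|=1$ for $i=2,\ldots,n-3$ are exactly the hypotheses of Proposition \ref{bocker2}, and definitiveness follows. I expect the only real difficulty to be the passage from ``at most one'' to ``exactly one'': locally, the taxon contributed by $q_i$ relative to its parent could a priori already occur in some earlier, non-adjacent quartet, and it is the global identity $supp(Q)=X$ that rules this out. The supporting claim behind that identity holds uniformly whether $v$ sits in a cherry or meets two interior edges, so the remaining steps are routine bookkeeping.
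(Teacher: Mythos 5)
Your proof is correct, and while it shares the paper's overall strategy---verify the hypotheses of Proposition \ref{bocker2}---it executes two steps differently, and more carefully, than the paper does. First, the paper orders the \emph{interior edges of $T$} (starting from an edge adjacent to a cherry, each subsequent edge adjacent to an earlier one) and then asserts that the quartets distinguishing adjacent edges have joint support five; but connectivity of $G_T(Q)$ does not guarantee that \emph{every} pair of quartets on adjacent tree edges is linked, only that the linkage graph is connected. Your choice to order via a breadth-first traversal of a spanning tree of $G_T(Q)$ itself sidesteps this: each $q_i$ is genuinely linked to its parent $q_{p(i)}$, so the support-five condition is available by definition. Second, the paper passes directly from ``each pair of linked quartets shares three taxa'' to ``each additional quartet adds one new taxon,'' which only yields $|supp(q_i)\backslash\bigcup_{j<i}supp(q_j)|\leq 1$; Proposition \ref{bocker2} needs equality. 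Your argument that $supp(Q)=X$ (every leaf $x$ lies in the singleton component across its pendant edge, hence appears in any quartet distinguishing an interior edge incident to its neighbor), combined with the telescoping count $4+\sum_{i\geq 2}|supp(q_i)\backslash\bigcup_{j<i}supp(q_j)|=n$, is exactly the missing step that forces each increment to equal $1$. Both arguments share the paper's implicit assumption that the $n-3$ quartets of a linked system biject with the $n-3$ interior edges; granting that, yours is the more complete proof.
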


\begin{proof}
Let $Q$ be a  linked system of quartet trees on a tree $T$ on a taxon set $X$. Linked systems are of size $n-3$ and each quartet tree distinguishes a unique edge in $T$. Let $T$ be a binary phylogenetic tree on a taxon set $X$ and let $e_1$ be an interior edge adjacent to a cherry. The tree is connected, which implies we can label the remaining interior edges $\{e_2, \cdots, e_{n-3} \}$ such that $e_j$ is adjacent to some $e_i$ with $i <j$. Moreover, because the support of each pair of quartet trees $\{q_i,q_j\}$ that distinguishes adjacent edges $\{e_i,e_j\}$ is five, each pair of quartet trees shares three taxa and each additional quartet tree in $Q$ adds one new taxon to the support of $Q$. Thus, linked systems meet the criteria in Proposition \ref{bocker2} and are definitive. 
\end{proof}

Though all linked systems are definitive, we find that not all definitive systems of size $n-3$ are linked.

\begin{ex}
The system of quartet trees $Q=\{12|36,23|45,24|56\}$ meets the criteria established in Proposition \ref{bocker2}, and is thus definitive. The graph $G_T(Q)$ contains the three vertices $q_1$, $q_2$, and $q_3$, where $q_2$ and $q_3$ are connected by an edge and $q_1$ is an isolated vertex. Thus, $Q$ is not a linked system. 
\end{ex}

Since B{\"o}cker's system and linked systems contain $n-3$ quartet trees, one might surmise that all compatible systems of quartet trees of a modest size would be definitive. However, there are  large systems of compatible quartet trees which are not definitive and large collections of quartets which are not decisive. A collection of quartets and the induced quartet trees on a caterpillar tree provides one such example.

\begin{defn} (\cite{semple}) A \emph{caterpillar} on $n$ leaves is a binary phylogenetic tree for which there exists an induced subtree on a sequence of distinct interior vertices $v_1,v_2,...,v_k$ such that, for all $i \in \{1,2,...,k-1\}$, $v_i$ and $v_{i+1}$ are adjacent.
\end{defn}

The ordering of vertices in a caterpillar tree induces an ordering of interior edges $e_i$ where $e_i$ connects $v_i$ with $v_{i+1}$.  We use this ordering to construct large families of quartet trees shared by several caterpillar trees.

\begin{thm}
The minimal number $k(n)$ such that every collection of quartets $S$  with $|S| \geq k$ is decisive is greater than ${n \choose 4} - (n-3)$.
\label{ge}
\end{thm}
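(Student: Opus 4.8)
The statement is a strict lower bound on $k(n)$, so the natural strategy is to exhibit one explicit collection that is large yet not decisive. Concretely, to prove $k(n) > {n \choose 4} - (n-3)$ it suffices to produce a single collection $\mathcal{S}$ of quartets with $|\mathcal{S}| = {n \choose 4} - (n-3)$ that is \emph{not} decisive: if such an $\mathcal{S}$ exists, then no integer $k \le {n \choose 4}-(n-3)$ can satisfy the defining property of $k(n)$, forcing $k(n) > {n \choose 4}-(n-3)$. By the definition of decisiveness, $\mathcal{S}$ fails to be decisive precisely when there are two non-isomorphic binary phylogenetic trees $T$ and $T'$ on $[n]$ with $T|Y = T'|Y$ for every $Y \in \mathcal{S}$. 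Thus the whole problem reduces to finding two distinct binary trees that induce the same quartet tree on all but $n-3$ of the ${n \choose 4}$ four-element subsets; $\mathcal{S}$ is then the collection of all remaining quartets.

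For the construction I would use a caterpillar. Let $T$ be the caterpillar with leaves read $1,2,\dots,n$ along the spine, so that $\{1,2\}$ and $\{n-1,n\}$ are its cherries and $T|\{a,b,c,d\}=ab|cd$ whenever $a<b<c<d$. Let $T'$ be the tree obtained from $T$ by the local interchange across the interior edge $e_1$ incident to the cherry $\{1,2\}$ that exchanges leaf $2$ and leaf $3$; equivalently, $T'$ is the caterpillar $1,3,2,4,5,\dots,n$, whose first cherry is $\{1,3\}$. Since $\{1,2\}$ is a cherry of $T$ but not of $T'$, the two trees are non-isomorphic. The reason this particular move is the right one is that the four subtrees meeting $e_1$ are $\{1\}$, $\{2\}$, $\{3\}$ and $\{4,\dots,n\}$, of sizes $1,1,1$ and $n-3$, and a single nearest-neighbor interchange alters the induced quartet of exactly those four-subsets having one leaf in each of the four subtrees, here $1\cdot 1\cdot 1\cdot(n-3)=n-3$ of them.

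The core of the proof is then the verification that $T$ and $T'$ differ on exactly these $n-3$ quartets and agree elsewhere. Because $T$ and $T'$ assign different spine positions only to leaves $2$ and $3$, a four-subset $Y$ can change its induced quartet only if it contains both $2$ and $3$; a short case analysis then shows $T|Y=T'|Y$ unless $Y=\{1,2,3,x\}$ with $4 \le x \le n$, in which case $T|Y=12|3x$ but $T'|Y=13|2x$. Setting $\mathcal{S}$ to be all quartets other than the $n-3$ sets $\{1,2,3,x\}$ gives $|\mathcal{S}|={n \choose 4}-(n-3)$ with $T|Y=T'|Y$ for all $Y \in \mathcal{S}$ yet $T \ne T'$, so $\mathcal{S}$ is not decisive and the bound follows. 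I expect the main obstacle to be precisely this case analysis, and in particular keeping track of the end-cherry ties: when only one of the two interchanged leaves lies in $Y$ its exact spine position shifts, so one must check that the left--right grouping of $Y$ relative to the other three leaves, and hence the induced quartet, is nonetheless unchanged.
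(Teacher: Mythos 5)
Your proposal is correct and takes essentially the same approach as the paper: the paper also exhibits caterpillar trees that differ only in the arrangement of the three taxa $a,b,c$ nearest the end edge $e_1$, so that they disagree precisely on the $n-3$ quartets $\{a,b,c,y\}$ and agree on the complementary collection of size ${n \choose 4}-(n-3)$. The only cosmetic differences are that you use two such caterpillars (a nearest-neighbor-interchange pair) where the paper uses three, and you spell out the case analysis that the paper leaves implicit.
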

\begin{proof} 
Let \{a,b,c\} be a subset of $X$. We define $T_1$, $T_2$, and $T_3$ to be three distinct caterpillar trees of size $n \geq 4$ that differ only in the placement of three taxa $a$, $b$, and $c$, such that for each tree $v_{int}(a), v_{int}(b)$ and $v_{int}(c)$ are incident to $e_1$. Denote $\mathcal{S}$ the set of $n-3$ sets $\displaystyle\bigcup_{i=1}^{n-3}Y_i=\{a,b,c,y|y \in X-\{a,b,c\}\}$ and let $\mathcal{S'}$ be the complement of $\mathcal{S}$. We observe that for all $Y\in \mathcal{S'}$ we have $T_1|Y=T_2|Y=T_3|Y$, but $T_1 \neq T_2 \neq T_3$.  Therefore, $S'$ is not decisive.  Since $|\mathcal{S'}|={n \choose 4} - (n-3)$ the minimal number $k(n)$ such that every collection $\mathcal{S}$ of quartets with $|\mathcal{S}| \geq k$ is decisive is greater than ${n \choose 4} - (n-3)$.
\end{proof}

To show that that sets of quartets of size ${n \choose 4} - (n-3)$ are decisive, we prove that $Q_T$  contains at least $n-3$ disjoint linked systems, ensuring the removal of any $n-4$ quartet trees from a compatible system would leave at least one linked system. We introduce a process for building such systems by using a seed quartet tree which distinguishes an edge of a tree, and systematically constructing additional quartet trees which distinguish the same edge.

In a phylogenetic tree, each interior edge $e=(v_l,v_r)$ is adjacent to four edges $e_i$, $e_j$, $e_h$, and $e_k$, which divide the tree into four components and partition the set of taxa $X$ into four distinct sets $A_i$, $A_j$, $A_k$, and $A_h$ with $x \in A_n$ if the unique path from $x$ to $v_l$ contains the edge $e_n$. 

\begin{defn} Let $q=ij|kh$ be a quartet tree that distinguishes an edge $e$ and let $i \in A_i$, $j \in A_j$, $k \in A_k$, and $h \in A_h$, where $A_i$, $A_j$, $A_k$, and $A_h$ are partitions of $X$ induced by $e$. For $x\in X - supp(q)$  define the \emph{quartet tree substitution} $q(x)$ to be the unique quartet tree in which the taxon $x \in A_n$ replaces the taxon in $q$ that is in $supp(q) \cap A_n$.
\end{defn}

Notice $q(x)$ and $q$ must distinguish the same edge of the tree.

\begin{defn}
Let the quartet tree $q$  distinguish an edge $e$ of a tree. Define the \emph{vine of q}  by $v(q)=\{q\} \cup \displaystyle\bigcup_{x\in X-supp(q)}q(x)$.  We refer to $q$ as the \emph{seed} of the vine.
\end{defn}

The following shows that if two quartet trees are linked, then so are their vines.

\begin{defn}
Two vines $v(q_i)$ and $v(q_j)$ are \emph{linked} if for each $q_i \in v(q_i)$ there exists a unique $q_j \in v(q_j)$ such that $q_i$ and $q_j$ are linked. 
\end{defn}

\begin{thm}
If $q_i$ and $q_j$ are linked quartet trees, then  the associated  vines $v(q_i)$ and $v(q_j)$ are linked.
\label{linkedquartets}
\end{thm}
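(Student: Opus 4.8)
The plan is to show that the ``matching'' between the vines $v(q_i)$ and $v(q_j)$ required by the definition of linked vines is provided by the quartet tree substitution operation itself. Recall that $q_i$ distinguishes an interior edge $e_i$ and $q_j$ distinguishes an adjacent interior edge $e_j$, and since $q_i$ and $q_j$ are linked we have $|supp(\{q_i,q_j\})|=5$, so they share exactly three taxa. First I would set up notation for the partition of $X$ induced by each of the two edges: write $A_1,A_2,A_3,A_4$ for the four blocks around $e_i$ and $B_1,B_2,B_3,B_4$ for the four blocks around $e_j$. The key structural observation is that because $e_i$ and $e_j$ are adjacent, collapsing the short path between them identifies three of the four blocks of each partition with three of the four blocks of the other, while the two ``inner'' blocks (the sides of $e_i$ and $e_j$ that face each other) are nested: one of the $B$-blocks is a union of some $A$-blocks and vice versa. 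I would make this precise by describing exactly how each block around $e_i$ sits relative to the blocks around $e_j$.

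Next I would define the matching explicitly. Given a seed substitution $q_i(x)\in v(q_i)$ for some $x\in X-supp(q_i)$, I want to name the unique partner in $v(q_j)$ to which it is linked. The natural candidate is to take $q_i(x)$, replace $x$ by the taxon of $q_i$ it displaced if necessary, and then apply the substitution pattern of $q_j$; concretely, I would send $q_i(x)$ to the element $q_j(x)$ of $v(q_j)$ when $x\notin supp(q_j)$, and otherwise to $q_j$ itself, and then verify this assignment is a well-defined bijection between $v(q_i)$ and $v(q_j)$. The crux of the verification is to check the two defining conditions of linkage for the matched pair: that the edges they distinguish remain adjacent, and that the matched pair still has support of size exactly five. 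Adjacency is automatic because every element of $v(q_i)$ distinguishes $e_i$ and every element of $v(q_j)$ distinguishes $e_j$ (this is the remark ``Notice $q(x)$ and $q$ must distinguish the same edge''), so $e_i$ and $e_j$ remain adjacent throughout. The support condition is where the real work lies, and I would handle it by a case analysis on whether the substituted taxon $x$ lands in a block shared by both partitions or in one of the two inner, nested blocks.

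The main obstacle I anticipate is precisely this support-size bookkeeping in the nested-block cases, together with confirming \emph{uniqueness} of the partner. When $x$ lies in the inner block of $e_i$ that is split by $e_j$ (or symmetrically), substituting $x$ into $q_i$ changes which of the shared three taxa coincide with $supp(q_j)$, and one must check that the natural partner still overlaps in exactly three taxa rather than two or four. I expect to resolve this by arguing that the three shared taxa of the seed pair $q_i,q_j$ sit in the three commonly-identified blocks, so that a substitution affecting only one block of $q_i$ perturbs the overlap by a controlled amount, and choosing $q_j(x)$ (resp.\ $q_j$) exactly compensates. Finally I would note that the matching is forced: because each $q_i(x)$ fixes a specific taxon in each of the four blocks around $e_i$, and linkage demands an overlap of exactly three taxa with an element of $v(q_j)$, at most one element of $v(q_j)$ can realize that overlap, giving the required uniqueness and completing the proof that $v(q_i)$ and $v(q_j)$ are linked.
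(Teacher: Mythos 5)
Your proposal follows essentially the same route as the paper: the pairing between the two vines is induced by quartet tree substitution with the same taxon, adjacency of the distinguished edges is automatic, and the real content is the case analysis (organized by which block of the edge-induced partitions the new taxon lands in) showing each matched pair still has support of size exactly five. The only substantive deviations are that you pair $q_i(z)$ with $q_j$ itself rather than with $q_j(y)$ as the paper does (both choices yield linked pairs), and that your closing uniqueness argument overreaches --- in fact $q_i$ shares exactly three taxa with both $q_j$ and $q_j(y)$, so ``at most one element of $v(q_j)$ realizes the overlap'' is false --- though the paper's own proof silently omits the uniqueness clause of the definition altogether, so on the part of the statement the paper actually proves, your argument matches it.
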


\begin{proof}
Assume that $\{q_i,q_j\}$ are the seeds of the adjacent edges $e_i$ and $e_j$ and are linked in $T$. Let $v(q_i)$ and $v(q_j)$ be the associated vines. 

Because the $supp(q_i,q_j)=5$, each quartet tree contains one taxon that the other does not. Let $z$ be the taxon in $supp(q_j)-supp(q_i)$ and $y$ be the taxon in $supp(q_i)-supp(q_j)$. Use quartet tree substitution to construct the quartet trees $q_i(z)$ and $q_j(y)$. By construction, $supp(q_i,q_j)=supp(q_i(z),q_j(y))$ and $q_i(z)$ and $q_j(y)$ are linked.

Use quartet tree substitution with each remaining taxon $x \in X-supp(q_i,q_j)$ on $q_i$ and $q_j$ to construct the remaining quartet trees in $v(q_i)$ and $v(q_j)$. In the construction of each $\{q_i(x),q_j(x)\}$ one taxon (Case 1) or two taxa (Case 2) are removed and one taxon $x$ is introduced. Thus, for  $x \in X - (supp(q_i,q_j))$ we have $4 \leq |supp(q_i(x),q_j(x))| \leq 6$. 

In Case 1, $x$ replaces the same taxon in $q_i$ and $q_j$ and $supp(q_i,q_j)$ remains the same. 

In Case 2, $x$ replaces one taxon in $q_i$ and a different taxon in $q_j$. 

Assume that $x$ replaces two different taxa in $supp(q_i,q_j)-supp(q_i \cap q_j)$. Then, $|supp(q_i(x),q_j(x))|=4$ and $q_i(x)=q_j(x)$. This is not possible since $q_i$ and $q_j$ distinguish different edges. Thus, $x$ does not replace two different taxa in $supp(q_i,q_j)-supp(q_i \cap q_j)$ and $|supp(q_i(x),q_j(x))| \neq 4$.

Now assume that $x$ replaces two different taxa in $supp(q_i \cap q_j)$. Then, $|supp(q_i(x),q_j(x))|=6$. Recall that $q_i$ and $q_j$ distinguish the edges $e_i$ and $e_j$. Thus, in order for $x$ to replace two different taxa in $supp(q_i \cap q_j)$, $x$ would have to be in two different sets of the partition that $e_i$ induces on $X$. This is not possible because $x$ cannot be in two different sets of a partition. Thus, $x$ does not replace two different taxa in $supp(q_i \cap q_j)$ and $|supp(q_i(x),q_j(x))| \neq 6$. 

Thus, in this case, $x$ replaces one taxon in $supp(q_i \cap q_j)$ and one taxon in $supp(q_i,q_j)-supp(q_i \cap q_j)$ and $|supp(\{q_i(x),q_j(x)\})|=5$. Therefore the vines $v(q_i)$ and  $v(q_j)$ are linked.
\end{proof}

A linking between vines allows us to construct multiple disjoint linked systems of quartet trees. We refer to these systems as \emph{meshed systems} and use them to show that any set of quartets of sufficient size is decisive. 

\begin{defn}
A \emph{meshed system} on a tree $T$ with taxon set $X$ is an $(n-3)$ by $(n-3)$ array of quartet trees, where each row is a linked system and each column is a vine.
\end{defn}

Note that the existence of a meshed system  ensures that the removal of up to $n-4$ quartet trees from $Q_T$ must leave at least one definitive set.

\begin{thm}
For any binary phylogenetic tree $T$ on a taxon set $X$, the system $Q_T$ of all quartet trees displayed by $T$ contains a meshed system. 
\label{number}
\end{thm}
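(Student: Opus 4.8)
The plan is to realize the $(n-3)\times(n-3)$ array by first constructing its top row as a linked system of seed quartet trees, then growing each seed into its vine to fill the columns, and finally invoking Theorem \ref{linkedquartets} to guarantee that the resulting rows are themselves linked systems.

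First I would fix an ordering of the interior edges. Since $T$ is connected, exactly as in the proof of Theorem \ref{hierarchy} I can label the $n-3$ interior edges $e_1,\dots,e_{n-3}$ so that $e_1$ is adjacent to a cherry and each $e_j$ with $j\ge 2$ is adjacent to some $e_i$ with $i<j$; recording one such index $i=p(j)$ for each $j$ singles out a spanning tree on the set of interior edges. I then choose a seed $q_1$ distinguishing $e_1$ and, proceeding in order of $j$, a seed $q_j$ distinguishing $e_j$ with $|supp(q_{p(j)})\cup supp(q_j)|=5$, so that $q_j$ is linked to $q_{p(j)}$. By construction the seeds $\{q_1,\dots,q_{n-3}\}$ form a connected associated graph and hence a linked system; this is the first row.

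Next I would form the columns. For each seed $q_j$ the vine $v(q_j)=\{q_j\}\cup\bigcup_{x\in X-supp(q_j)}q_j(x)$ has exactly $n-3$ elements: there are $n-4$ taxa outside $supp(q_j)$, each substitution $q_j(x)$ is distinct from $q_j$ and from the others, and every element of $v(q_j)$ distinguishes the single edge $e_j$. Placing $v(q_j)$ as the $j$th column therefore yields an array of the correct dimensions. It remains to index the entries of the vines so that the \emph{rows}, and not merely the columns, carry the required structure.

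The key step, and the one I expect to be the main obstacle, is the global alignment of the vines into coherent rows. Theorem \ref{linkedquartets} supplies, for each spanning-tree pair $e_{p(j)},e_j$, a bijection between $v(q_{p(j)})$ and $v(q_j)$ sending each quartet tree to the unique element of the other vine to which it is linked. Because these bijections are attached to the edges of a spanning tree rather than around cycles, I can propagate a single indexing outward from the root vine $v(q_1)$ along $p(\cdot)$ without conflict, assigning to each column a labelling $1,\dots,n-3$ of its elements. For each fixed row index $m$, the entries in columns $p(j)$ and $j$ are then linked for every $j$, so the associated graph of the $m$th row contains the entire spanning tree and is connected; since each of its entries distinguishes a distinct edge $e_j$, the $m$th row is a linked system. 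Finally I would note that all $(n-3)^2$ entries are distinct---entries in different columns distinguish different edges, and entries within a column are distinct substitutions---so the rows are pairwise disjoint, exactly as the subsequent decisiveness argument requires. Checking that the propagated indexing is well defined and that each row distinguishes every edge exactly once is the delicate part; everything else follows formally from Theorem \ref{linkedquartets} and the definition of a vine.
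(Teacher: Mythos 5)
Your proposal follows essentially the same route as the paper's proof: order the interior edges so each is adjacent to an earlier one, build a linked first row of seeds, grow each seed into its vine to form the columns, and invoke Theorem \ref{linkedquartets} to link adjacent vines and assemble the remaining rows. If anything, your explicit propagation of the vine-to-vine bijections along the spanning tree of interior edges is more careful than the paper, which asserts the row alignment directly without addressing that the bijection from Theorem \ref{linkedquartets} cross-matches the taxa in the symmetric difference of the supports.
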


\begin{proof}
Let $T$ be a binary phylogenetic tree on a taxon set $X$ and let $e_1$ be an interior edge adjacent to a cherry. The tree is connected, which implies we can label the remaining interior edges $\{e_2, \cdots, e_{n-3} \}$ such that $e_j$ is adjacent to some $e_i$ with $i <j$.

Let $e_j$ be adjacent to $e_i$ with $i<j$. We know that $e_i$ separates $T$ into two connected components $T_i^a$ and $T_i^b$. Moreover, $e_j$ separates $T$ into two connected components $T_j^a$ and $T_j^b$. Because $e_i$ and $e_j$ are adjacent, $supp(T_i^b) \cap supp(T_j^a) \neq \emptyset$. Let $q_i=ab|cd$ and $q_j=ac|de$ such that $a \in supp(T_i^a)$, $c \in supp(T_j^b)$, and $d \in supp(T_i^b) \cap supp(T_j^a)$. Thus, $q_i$ and $q_j$ are linked for all $e_i$ and $e_j$ in $T$. Thus, we have a linked system that makes up the first row of our matrix. 

Using quartet tree substitution, construct vines $v(q_i)$ and $v(q_j)$. By Theorem \ref{linkedquartets}  the vines  $v(q_j)$ and $v(q_i)$ are linked. Thus, we have $n-3$ disjoint columns of quartet trees in our matrix. Additionally, for each pair of linked quartet trees $\{q_i,q_j\}$ in row one, there exists a pair $\{q_i(x),q_j(x)\}$ in the remaining rows of the matrix that are linked. Thus, we have $n-3$ rows of linked systems.  

Therefore, $Q_T$ contains a meshed system. 
\end{proof}

The existence of a meshed system allows us to find the minimal number, $k(n)$, such that every collection $\mathcal{S}$, of quartets with $|\mathcal{S}| \geq k$ is decisive.

\begin{thm}
\label{thm:big}
The  number $k(n)={n \choose 4} - (n-4)$ is the smallest number such that every collection of quartets $\mathcal{S}$ on a taxon set $X=[n]$ such that  $\mathcal{S} \geq k$ is decisive.
\label{minimal}
\end{thm}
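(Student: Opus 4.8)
The plan is to prove the two inequalities $k(n) \ge {n \choose 4}-(n-4)$ and $k(n) \le {n \choose 4}-(n-4)$ separately. The lower bound is already in hand: Theorem \ref{ge} exhibits a collection of quartets of size ${n \choose 4}-(n-3)$ that is not decisive, so $k(n) > {n \choose 4}-(n-3)$, which is exactly $k(n) \ge {n \choose 4}-(n-4)$. It therefore remains to establish the upper bound, namely that every collection $\mathcal{S}$ of quartets on $X=[n]$ with $|\mathcal{S}| \ge {n \choose 4}-(n-4)$ is decisive, and the bulk of the argument is directed at this.

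First I would reduce decisiveness to a statement about definitiveness of induced systems. I claim $\mathcal{S}$ is decisive provided that, for \emph{every} binary phylogenetic tree $T$ on $X$, the induced system $\{T|Y : Y \in \mathcal{S}\}$ is definitive. Indeed, if $T$ and $T'$ satisfy $T'|Y = T|Y$ for all $Y \in \mathcal{S}$, then $T'$ displays every quartet tree in $\{T|Y : Y \in \mathcal{S}\}$; since this system is definitive, $T$ is the unique tree displaying it, forcing $T' = T$. Hence it suffices to fix an arbitrary binary tree $T$ and prove that its induced system is definitive.

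Next I would count and then invoke the meshed system. Each four-element subset $Y$ induces exactly one quartet tree $T|Y$ with $supp(T|Y)=Y$, so distinct quartets yield distinct quartet trees and $Y \mapsto T|Y$ is a bijection from the set of all ${n \choose 4}$ quartets onto $Q_T$. Writing $\mathcal{S}'$ for the complement of $\mathcal{S}$, we have $|\mathcal{S}'| \le n-4$, and the induced system $\{T|Y : Y \in \mathcal{S}\}$ is precisely $Q_T$ with the at most $n-4$ quartet trees indexed by $\mathcal{S}'$ deleted. By Theorem \ref{number}, $Q_T$ contains a meshed system: an $(n-3)\times(n-3)$ array whose rows are linked systems and whose columns are vines. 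Because each quartet tree distinguishes a unique edge and distinct vines distinguish distinct edges, the $n-3$ rows are pairwise disjoint sets of quartet trees. Deleting at most $n-4$ quartet trees can therefore meet at most $n-4$ of these rows, leaving at least one row intact. That surviving row is a linked system, hence definitive by Theorem \ref{hierarchy}; as it is contained in $\{T|Y : Y \in \mathcal{S}\} \subseteq Q_T$ and is uniquely displayed by $T$, the full induced system is definitive as well. Since $T$ was arbitrary, $\mathcal{S}$ is decisive, giving $k(n) \le {n \choose 4}-(n-4)$ and, combined with the lower bound, the equality claimed.

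The step I expect to be the main obstacle is not the pigeonhole counting, which is routine once the setup is in place, but rather the two structural facts that make that counting legitimate: the equivalence between decisiveness (a condition on pairs of trees) and definitiveness of the induced system for each fixed $T$, and the disjointness of the $n-3$ rows of the meshed system. The latter relies on the fact that a quartet tree distinguishes a unique edge, so that vines attached to distinct edges share no quartet trees; if the rows could overlap, deleting $n-4$ quartet trees might damage all $n-3$ rows and the argument would collapse. Verifying these two points carefully is where the real content lies.
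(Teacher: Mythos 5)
Your proposal is correct and follows essentially the same route as the paper's proof: the lower bound from Theorem \ref{ge}, and the upper bound by fixing $T$, $T'$ with equal restrictions, observing that the induced system omits at most $n-4$ quartet trees of $Q_T$, and applying the pigeonhole principle to the $n-3$ disjoint linked systems (rows) of the meshed system from Theorem \ref{number}, then invoking Theorem \ref{hierarchy}. You merely make explicit two details the paper leaves implicit, namely the bijection $Y \mapsto T|Y$ and the pairwise disjointness of the rows.
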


\begin{proof}
Let $\mathcal{S}$ be a collection of quartets on a taxon set $X=[n]$, with $|\mathcal{S}| \ge {n \choose 4} - (n-4)$. Let $T$ and $T'$ be two phylogenetic trees such that $T|Y=T$'$|Y$ for all $Y \in \mathcal{S}$. We define $Q \subset Q_T$ to be the collection of $T|Y$ for all $Y \in \mathcal{S}$. By Theorem \ref{number}, $Q_T$ contains a meshed system $M$. By the pigeon hole principle, if  $|Q| \geq {n \choose 4} - (n-4)$ then $Q$ must contain one of the linked systems in $M$, and by Theorem \ref{hierarchy}, $Q$ is definitive.  Thus, $T$ is the unique tree which displays $Q$.  However since $Q$ is also $T'|Y$ for all $Y \in \mathcal{S}$, we must have $T=T'$.  Therefore, $\mathcal{S}$ is decisive. Moreover, Theorem \ref{ge} shows that $k \geq {n \choose 4} - (n-4)$.  Therefore  $k(n)={n \choose 4} - (n-4)$ is the minimal number such that every collection of quartets with $|\mathcal{S}| \geq k$ is decisive.
\end{proof}

In addition to establishing requirements for collections of subsets of $[n]$ to be decisive, \cite{steel} provides a formula for the probability that a particular collections of subsets of $[n]$ will be decisive for an arbitrarily sampled phylogenetic tree. In this section, we prove a similar result by finding a lower-bound for the probability that a collection of subsets of $[n]$ of a particular size will be phylogenetically decisive. This bound is independent of the underlying tree topology.

\begin{thm}
The probability $p(X,k)$ that an arbitrary collection of $k$ quartets is decisive has the property \[p(X,k) \geq \frac{\sum\limits_{i=1}^{n-3} (-1)^{i+1}{{n-3} \choose i}{{|Q_T|-i(n-3)} \choose {|Q_T|-k}}}{{|Q_T| \choose {k}}}.\]
\end{thm}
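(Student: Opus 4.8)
The plan is to isolate a tree-independent sufficient condition for $\mathcal{S}$ to be decisive whose probability can be evaluated exactly by inclusion--exclusion, and then to use that probability as the asserted lower bound. Fix an underlying tree $T$. By Theorem~\ref{hierarchy} every linked system is definitive, so, exactly as in the proof of Theorem~\ref{minimal}, if the induced system $Q=\{T|Y : Y\in\mathcal{S}\}$ contains a complete linked system, then $Q$ is definitive and any tree agreeing with $T$ on every $Y\in\mathcal{S}$ must equal $T$. Since distinct quartets have distinct supports, the assignment $Y\mapsto T|Y$ is a bijection onto $Q$, so this event is precisely the event that the randomly chosen quartets include the supports of a full linked system. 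Hence $p(X,k)$ is bounded below by the probability of that event.

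To make this probability explicit I would invoke Theorem~\ref{number}: the displayed quartet trees $Q_T$ contain a meshed system $M$, an $(n-3)\times(n-3)$ array whose $n-3$ rows $R_1,\dots,R_{n-3}$ are pairwise disjoint linked systems, each consisting of $n-3$ quartet trees with distinct supports. Writing $N=|Q_T|$, choosing $k$ quartets uniformly at random, and letting $A_j$ be the event that every quartet of row $R_j$ is chosen, the sufficient condition above is implied by $\bigcup_{j=1}^{n-3}A_j$. Because the rows are disjoint, fixing any $i$ of them pins down $i(n-3)$ distinct quartets, so for $|I|=i$ one has
\[ \Pr\!\Big(\bigcap_{j\in I}A_j\Big)=\frac{{N-i(n-3) \choose k-i(n-3)}}{{N \choose k}}=\frac{{N-i(n-3) \choose N-k}}{{N \choose k}}, \]
where the second equality is the symmetry ${m \choose r}={m \choose m-r}$ and the term is understood to vanish when $k<i(n-3)$.

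Inclusion--exclusion over the $n-3$ rows then yields
\[ p(X,k)\;\ge\;\Pr\!\Big(\bigcup_{j=1}^{n-3}A_j\Big)\;=\;\sum_{i=1}^{n-3}(-1)^{i+1}{n-3 \choose i}\,\frac{{N-i(n-3) \choose N-k}}{{N \choose k}}, \]
which is exactly the claimed bound after replacing $N$ by $|Q_T|$. Since $N={n \choose 4}$ and the right-hand side depends only on $N$ and on the block structure of $M$ (namely $n-3$ disjoint blocks of size $n-3$), and not on the specific quartets occurring in any particular $M$, the bound is independent of the tree topology, as asserted.

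The step I expect to be the main obstacle is justifying the sufficient condition cleanly, since the meshed system $M$ is attached to a fixed tree: I must argue that containing one of its rows forces the desired uniqueness (via definitiveness, as in Theorem~\ref{minimal}) and reconcile this per-tree event with the tree-independent value of the bound. Establishing that the $(n-3)^2$ entries of $M$ have pairwise distinct supports, so that $i$ rows genuinely account for $i(n-3)$ distinct quartets and the inclusion--exclusion counts are exact, is the other point requiring care; by contrast, the binomial manipulation and the handling of the degenerate ranges where $k<i(n-3)$ are routine.
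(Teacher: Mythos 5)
Your proposal is correct and follows essentially the same route as the paper: reduce decisiveness to the event that the induced quartet trees contain one of the $n-3$ disjoint rows of a meshed system (via Theorems \ref{hierarchy}, \ref{number}, and the argument of Theorem \ref{minimal}), then apply inclusion--exclusion over those disjoint rows. You simply carry out the binomial computation and the support-distinctness check explicitly where the paper leaves them implicit.
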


\begin{proof}
 Let $\mathcal{S}$ be a collection of $k$ quartets.  Let $T$ and $T'$ be two phylogenetic trees such that $T|Y=T'|Y$ for all $Y \in \mathcal{S}$. We define $Q \subset Q_T$ to be the collection of $T|Y$ for all $Y \in \mathcal{S}$.  Following Theorem \protect\ref{thm:big}, if $Q$ contains a definitive set of quartet trees, then $S$ is decisive.  By Theorem \protect\ref{hierarchy}, if a collection of compatible quartet trees contains a linked system of quartets, then it is definitive. Thus, the probability that a collection $\mathcal{S}$ is decisive is at least the probability that $Q$ contains one of the $n-3$ disjoint linked systems constructed in Theorem \protect\ref{number}. The formula follows from applying the inclusion-exclusion principle to count the number of subsets of size $k$, which contain one of the disjoint systems of linked quartets. 
\end{proof}

To illustrate the utility of the formula, we express the lower bound probability versus the number of quartets selected in Figure \ref{fig:probgraph}. In Figure \ref{fig:decisive}, we plot the number quartets required to ensure a fixed accuracy as a power of $n$. Notice the number of quartets needed to ensure the sample is decisive with accuracy of $25 \%$ is on the order of $n^c$ with $c \sim 3.3$ and is almost indistinguishable from the number required to ensure $99\%$ accuracy. 


\begin{figure}[ht]
 \centering
 \includegraphics[width=.65\linewidth]{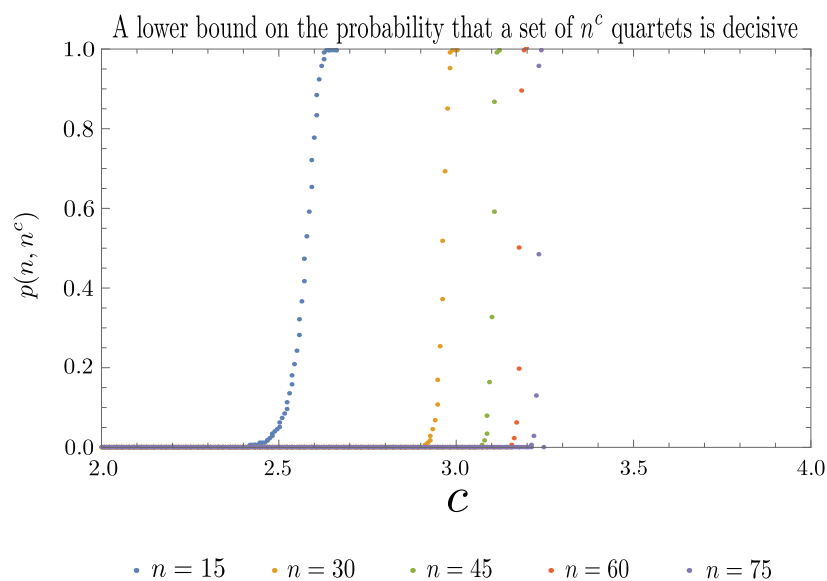}
 \caption{A lower bound on the probability that a set of quartets is decisive.}
\label{fig:probgraph}
\end{figure}

\begin{figure}[ht]
 \centering
 \includegraphics[width=.65\linewidth]{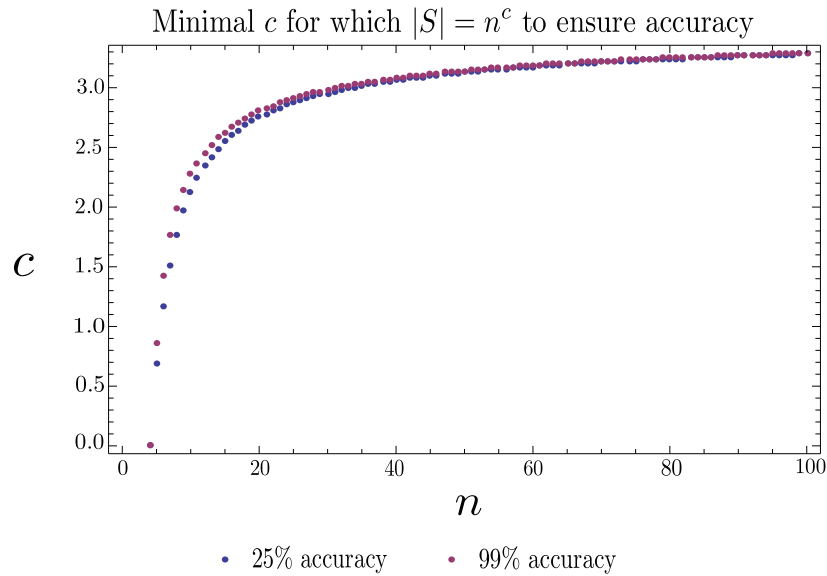}
 \caption{Size of compatible quartets as a power of $n$  required to ensure a decisive subset with fixed probability}
\label{fig:decisive}
\end{figure}

\section{Conclusion}

Using the criteria that B{\"o}cker established in \cite{bocker1998}, we have developed a new type of definitive system of quartet trees, linked systems. We have also developed groups of linked systems, known as meshed systems. We have used meshed systems to show that the number of quartets required to ensure decisiveness is on the order of $O(n^4)$. Moreover, we have used meshed systems to show the probability that an arbitrary collection of quartets contains a decisive system. These results lend credence to sampling  quartets on the order of $n^{3.3}$. 

It has been suggested that smaller sets of representative quartet trees will play a crucial role in developing efficient scalable supertree methods, as the use of all quartet tree samples may be computationally inefficient  \cite{supertree}. Thus, linked systems may be useful inputs in such algorithms. However, some supertree methods such as Quartets MaxCut do not always return a fully resolved tree even when the input sets contain small definitive systems. For example, MaxCut does not return a fully resolved tree for the linked system $Q_1=\{12|35,13|45,14|56\}$, but returns the correct tree for the meshed system $M=\{Q_1,Q_2,Q_3\}$ where $Q_2=\{12|34,23|45,24|56\}$ and $Q_3=\{12|36,23|46,34|56\}$. Therefore, we anticipate that both linked and meshed systems will serve as efficient inputs for future supertree algorithms, as these algorithms could be reformulated to emphasize small definitive units.

\section{Acknowledgements} 
Both authors were supported by grants from the National Center for Research Resources
(5 P20 RR016461) and the National Institute of General Medical Sciences (8 P20 GM103499)
from the National Institutes of Health. We would also like to thank Dr. Mike Steel for introducing us to the concept of decisiveness and giving input throughout the writing process. 
 
\appendix




\bibliographystyle{elsarticle-num}

\bibliography{finalbib}

\begin{thebibliography}{1}
\expandafter\ifx\csname url\endcsname\relax
  \def\url#1{\texttt{#1}}\fi
\expandafter\ifx\csname urlprefix\endcsname\relax\def\urlprefix{URL }\fi
\expandafter\ifx\csname href\endcsname\relax
  \def\href#1#2{#2} \def\path#1{#1}\fi

\bibitem{bocker1998}
S.~B{\"o}cker, A.~W. Dress, M.~A. Steel, Patching up $x$-trees, Annals of
  Combinatorics 3~(1) (1999) 1--12.

\bibitem{snir}
S.~Snir, S.~Rao, Quartets maxcut: A divide and conquer quartets algorithm,
  IEEE/ACM Transactions on Computational Biology and Bioinformatics 7~(4)
  (2010) 704--718.

\bibitem{huber}
A.~Dress, K.~T. Huber, J.~Koolen, V.~Moulton, A.~Spillner, Basic Phylogenetic
  Combinatorics, Cambridge University Press, 2012.

\bibitem{semple}
C.~Semple, M.~A. Steel, Phylogenetics, Vol.~24, Oxford University Press, 2003.

\bibitem{steel}
M.~Steel, M.~J. Sanderson, Characterizing phylogenetically decisive taxon
  coverage, Applied Mathematics Letters 23~(1) (2010) 82--86.

\bibitem{warnow}
S.~Snir, T.~Warnow, S.~Rao, Short quartet puzzling: A new quartet-based
  phylogeny reconstruction algorithm, Journal of Computational Biology 15~(1)
  (2008) 91--103.

\bibitem{strimmer}
K.~Strimmer, A.~von Haeseler, Quartet puzzling: A quartet maximum-likelihood
  method for reconstructing tree topologies, Molecular Biology and Evolution
  13~(7) (1996) 964--969.

\bibitem{supertree}
M.~S. Swenson, R.~Suri, C.~R. Linder, T.~Warnow, An experimental study of
  quartets maxcut and other supertree methods, Algorithms for Molecular Biology
  6~(7).

\end{thebibliography}

\end{document}